\newcommand{\beq}{\begin{equation}}
\newcommand{\eeq}{\end{equation}}
\newcommand{\bea}{\begin{eqnarray}}
\newcommand{\eea}{\end{eqnarray}}
\newtheorem{theorem}{Theorem}
\newtheorem{prop}{Proposition}
\begin{document}

\title{Quantum Resource Correction}

 \author{Mark S. Byrd$^1$}
% \author{Eric Chitambar}
 \author{Daniel Dilley$^2$}
 \author{Alvin Gonzales$^2$}
 \author{Zain Saleem$^2$}
 \author{Masaya Takahashi$^{1,}$$^{3,}$$^4$}
 \author{Lian-Ao Wu$^5$}

 \affiliation{$^1$School of Physics and Applied Physics, Southern Illinois University, Carbondale, Illinois 62901-4401, USA}
 
 \affiliation{$^2$Mathematics and Computer Science Division, Argonne National Laboratory}

 \affiliation{$^3$Center for Theoretical Physics, Polish Academy of Sciences,  Lotników 32/46, 02-668 Warszawa, Poland}
 \affiliation{$^4$Department of Physics, National Taiwan Normal University, Taipei 11677, Taiwan}
 \affiliation{$^5$Department of Physics, University of the Basque Country UPV/EHU and EHU Quantum Center, University of the Basque Country UPV/EH, 48080 Leioa, IKERBASQUE, Basque Foundation for Science,  48011, Bilbao Spain}
 
\begin{abstract}
Resource theories play a crucial role in characterizing states and properties essential for quantum information processing. A significant challenge is protecting resources from errors. We explore strategies for correcting quantum resources. We show that resource preserving operations in resource theory define a gauge freedom on code spaces, which allows for recovery strategies that can correct the resource while changing non-essential properties. This allows decoding to be simplified. The results are applicable to various resource theories and we provide an application to quantum sensing.
\end{abstract}

% \pacs{TBD}
\maketitle

%\tableofcontents

%---------------------------------------------------------------------------------------
%---------------------------------------------------------------------------------------

{\it Introduction}--A standard quantum error-correcting code (QECC) \cite{Knill_1997TheoryOfQECC,  Nielsen_1998InfoTheorApproachToQECAndReversMeas, gottesman_1997stabilizercodesquantumerror} enables the detection and correction of quantum errors by encoding logical information into a subspace of a larger physical Hilbert space. The primary goal is to restore the encoded quantum state to its original logical state. While this typically means restoring the state to an error-free version, exceptions exist. For instance, operator quantum error correction (OPQEC) permits variations in the state that do not affect the stored information \cite{Kribs_2005UnifiedAndGenApproachToQEC, Kribs_2006OperatorQEC, Bacon_2006OperatorQECSubsysForSelfCorrQuantMem, Poulin_2005StabFormForOperatorQEC, Nielsen_2007AlgebraicAndInfoCondForOperatorQEC}. OPQEC encompasses standard QECC and subsystem encoding methodologies \cite{Duan_1997_PreservingCohInQuantCompByPairQuantBits_DFS, Lidar_1998DFSForQuantComp, Zanardi_1997NoiselessQuantCodes_DFS, knill_1999theoryquantumerrorcorrectionForGenNoise_NS, Zanardi_2000StabQuantInfo_NS, Kempe_2001TheoryOfDecohFreeFaultTolUnivQuantComp_NS}.

Quantum coherence quantifies a state's superposition and along with entanglement enables much of the capabilities of quantum information processing such as quantum key distribution \cite{Bennett_2014BB84}, teleportation \cite{Bennett_1993Teleportation}, and prime factoring \cite{Shor_1997PrimeFactoring}. In this work, we adopt a broader perspective on entanglement \cite{Bennet_1996MixeStEntAndQEC, Vedral_1998EntanglementMeasuresAndPurificationProcedures} and coherence \cite{aberg2006quantifyingsuperposition, Baumgratz_2014QuantCoherence, Winter_2016OperationalResourceTheoryOfCoherence}, viewing them as resources within quantum resource theory \cite{Baumgratz_2014QuantCoherence, StreltsovRev:17, Chitambar_2019QuantumResourceTheor}. This framework analyzes information processing using a restricted set of permissible states and operations defined as free states and free operations, respectively \cite{Chitambar_2019QuantumResourceTheor}. Notably, incoherent states (those with zero coherence) 
are considered free states, and incoherent operations (which do not increase coherence) are categorized as free operations. 

In scenarios where errors occur, the objective shifts from state correction to resource correction, indicating that the original state may not be restored, while the resource is restored. The necessity of the resource alone for performing a specific task remains an interesting open question, dependent on the task itself. Still, we provide an application of quantum resource correction to quantum sensing. Intuitively, a smaller overhead is likely needed to restore a resource, rather than the state itself, during such a correction process. A previous investigation sought to preserve certain quantum properties, but their approach does not use encoding and instead relies on the use of a control Hamiltonian to preserve the property for as long as possible \cite{saurav_2025QuantumPropertyPreservation}. This strategy was utilized specifically for coherence \cite{Lidar_2005StabilizingQubitCoherenceViaTrackCntrl, hecht_2024beatingramseylimitsensingWithDetermQCtrl}. These approaches differ from the strategy in this paper, since we focus on encoding the states.

More specifically, the novel contributions of our work are as follows. We analyze and present necessary and sufficient conditions for logical resource correction in an information theory context for resources whose resource preserving operations are unitary. 
% While the resource correction conditions turn out to be similar to the QECC conditions, they allow for a simpler correction process. 
The logical resource preserving unitaries are not errors and therefore define a gauge freedom of the code. Thus, there are fewer possible logical errors. This provides an important simplification on the decoding problem, which is NP-complete or \#P-complete problem (depending on if degenerate decoding is used) \cite{Iyer_2015HardnessOfDecodingQuantStabCodes, deMarti_iOlius_2024DecodingAlgForSurfCodes}. Decoders determine the correction operations needed from the error syndrome and must balance decoding time and accuracy to minimize decoherence \cite{ Terhal_2015QECForQuantMems, Bravyi_2014EfficientAlgsFroMaxLikelihoodDecodingInTheSurfaceCode, Delfosse_2021almostlineartimeDecodingAlgForTopCodes, Fuentes_2021DegeneracyAndItsImpactOnTheDecodingOfSparseQC, Higgott_2025SparseBlossomCorrAMillErrsPerCoreSecondWithMWP}. The freedom in resource correction relaxes the constraints for the decoding problem and allows for great flexibility. Next, we apply these results to logical coherence correction. The $l_1$ norm of coherence is used, which is the sum of the absolute values of the off-diagonal elements of the density matrix for a fixed basis called the incoherent basis \cite{Baumgratz_2014QuantCoherence, Chitambar_2019QuantumResourceTheor}. Note that other measures exist such as the relative entropy of coherence \cite{Baumgratz_2014QuantCoherence,Yao_2016FrobeniusNormBasedMeasOfQuantcohAndAsym}.    

We also investigate leveraging the physical basis of the coherence measure and identify a class of channels that leave the coherence of any state from a subspace invariant. These resource-preserving channels also form a resource \cite{Hsieh_2020resourcePreservability}. For error correcting codes, the orthogonality of the error subspaces for a correctable error channel implies that the coherence does not change for any element of the code space when acted on by the discretized error channel. Thus, we can combine stabilizer measurements in many instances. This is possible while the number of physical errors is below the number of correctable errors. Next, we  investigate entanglement. The necessary and sufficient conditions also apply to entanglement. Lastly, we present explicit examples, including the Shor code, and a quantum sensing application of coherence correction, which highlight the distinctions and unique cases arising between quantum error correction and resource correction. While our examples focus on coherence and entanglement, similar methodologies can be extended to other quantum resources. 

{\it Resource Correction}--Throughout this paper, a free operation of a resource is defined as an operation which cannot create the resource. Let $Q$ be a proper resource quantifier which satisfies monotonicity under a free operation. We call a map resource preserving if it does not change the amount resource for all $\rho$ in the set of input states. Let $\mathcal{E}(\rho) = \sum_i E_i \rho E_i^\dagger$ be an error channel. When an operation $\mathcal{R}(\rho) = \sum_r R_r\rho R_r^\dagger$ satisfies the following condition, we say $R$ corrects the resource in the state $\rho $ measured by $Q$ against $\mathcal{E}$.  
\begin{align}\label{eq:equality}
    Q(\rho) -  Q\left(\sum_{r,\alpha} R_r E_\alpha \rho E_\alpha^\dagger R_r^\dagger\right) = 0.
\end{align}
After correction, the quantum state may differ by a resource preserving operation, but retains the same resource as the initial state before any errors occurred. 

% We investigate two scenarios: the resource correction and single-output resource correction. For the resource correction scenario we require that Eq.~\eqref{eq:equality} is satisfied $\forall \rho$ in the code space.
% In the single-output scenario, we require that the final state has the correct amount of resource in a single output, but the process of error and correction may yield a different output state for each run and thus the mixed outcome of states may have a different level of resource.

First, we present necessary and sufficient conditions for logical resource correction where the resource preserving operations   $\mathcal{F}$ are unitaries. Let a bar over a quantity denote the logical counterpart, e.g., $\ket{\bar{b}}$ is a logical state. Then $\bar{\mathcal{F}}$ is the set of logical unitaries that preserve the logical resource. The set of possible logical errors for resource correction is smaller than in QECC. Specifically, the set $\mathcal{O}/\bar{\mathcal{F}}$ forms a restricted subset of $\mathcal{O}$, where $\mathcal{ O}$ is the set of logical unitaries. 

In resource correction, the error is corrected iff
\begin{align}
    Q(\mathcal{R}\circ\mathcal{E}(\rho))=Q(\Phi(\rho)),\quad \forall \rho\in \mathcal{C},
\end{align}
where $\Phi$ is a resource preserving channel and $\mathcal{C}$ is the code space. This is less strict than QECC, where the error is corrected iff $\mathcal{R}\circ \mathcal{E}(\rho)=\rho, \forall \rho \in \mathcal{C}$. The goal in resource correction is to correct the output of the function $Q$.
\begin{theorem}\label{thm:logical_resource}
    Let $\mathcal{E}$ be a noise channel. The necessary and sufficient conditions for logical resource correction are
    \begin{align}\label{eq:thm_cond_qec}
        P E_i^\dagger E_k P=\alpha_{ik}\bar{U}_{\bar{F}}P\bar{U}_{\bar{F}}^\dagger, \forall \bar{U}_{\bar{F}}\in\bar{\mathcal{F}}
    \end{align}
    where $\alpha_{ik}$ is a Hermitian matrix and $P$ is the projector onto the code space. 
\end{theorem}

\begin{proof}
It is necessary that $\mathcal{R}\circ\mathcal{E}$ preserves the resource of any element of the code space. Thus, $\mathcal{R}\circ\mathcal{E}$ acts as a resource preserving unitary channel on the code space \cite{Peng_2016MaxCohStatesAndCoherencePreservOps}. Therefore, we have
\begin{align}
\mathcal{R}\circ\mathcal{E}(P\rho P)=\sum_{ij}R_jE_iP\rho P E_i^\dagger R_j^\dagger\\
\notag=\bar{U}_{\bar{F}}P\rho P\bar{U}_{\bar{F}}^\dagger, \; \bar{U}_{\bar{F}}\in \bar{\mathcal{F}}. 
\end{align}

Following standard quantum error correction arguments (Ref.~\cite{Nielsen_1998InfoTheorApproachToQECAndReversMeas} and Ref.~\cite{nielsen2011quantumCompAndQuantInfo} Chapter 10), we can use the unitary degree of freedom of CP maps and there exists complex numbers $c_{ij}$ such that 
\begin{align}
&R_{j}E_iP = c_{ji}\bar{U}_{\bar{F}}P \; \Rightarrow \;
PE_k^\dagger R_j^\dagger R_jE_iP =c_{jk}^*c_{ji}P
\end{align}
Summing over $j$ and using the fact that $\sum_jR_j^\dagger R_j=I$, we get
\begin{align}
    \label{eq:qecCond}P E_i^\dagger E_k P=\sum_jc_{ji}^*c_{jk}P=\alpha_{ik}P=\alpha_{ik}\bar{U}_{\bar{F}}P\bar{U}_{\bar{F}}^\dagger,
\end{align}
where $\alpha_{ik}$ is a Hermitian matrix. Since this is the QECC condition \cite{nielsen2011quantumCompAndQuantInfo} up to a resource preserving unitary, it is also sufficient.
\end{proof}

For resources, the correction is simplified. Consider the projector $P$. If we conjugate this with a $\bar{U}_{\bar{F}}\in \bar{\mathcal{F}}$, we have
\begin{align}
    \bar{U}_{\bar{F}}P\bar{U}_{\bar{F}}^\dagger=P, \forall \bar{U}_{\bar{F}}\in \bar{\mathcal{F}},
\end{align}
since it is a logical operation. In standard quantum error correction when $\bar{U}_{\bar{F}}$ is induced by the error channel, it results in a logical error. In contrast, in resource correction these do not result in a logical error, since elements of $\bar{\mathcal{F}}$ preserve the resource.

In this sense, $\bar{\mathcal{F}}$ defines a gauge freedom on the code space since these operations leave the resource invariant and form a group. This is important, since it allows for great flexibility in the decoding of the error syndrome, which is an NP-complete or \#P-complete problem \cite{Iyer_2015HardnessOfDecodingQuantStabCodes, deMarti_iOlius_2024DecodingAlgForSurfCodes}.

%  More concretely, consider the algebra of logical observables that are preserved by $\bar{\mathcal{F}}$
% \begin{align}
%     \mathcal{A}_R=\{\bar{O}|\bar{U}_{\bar{F}} \bar{O}\bar{U}_{\bar{F}}^\dagger=\bar{O}, \forall \bar{U}_{\bar{F}}\in \bar{\mathcal{F}}\}.
% \end{align}
% Resource correction only has to correct the algebra $\mathcal{A}_R$, whereas quantum error correction has to correct the full logical algebra of observables.

\textit{Logical Coherence Correction}--For the resource of coherence, we focus on the $l_1$ norm measure of coherence, $C(\rho)$ \cite{Baumgratz_2014QuantCoherence}, which is the sum of the absolute values of the off-diagonal terms of $\rho=\sum_{jk}\rho_{jk}\op{j}{k}$ in a fixed basis $S$, i.e.,
\begin{align}
    C(\rho)=\sum_{i\neq j} \abs{\rho_{ij}}.
\end{align}
The free unitary operations are 
\begin{equation}\label{eq:physical_free_u}
V_F =\sum_{j\in S}e^{-i\phi_j}\op{\pi(j)}{j}, \end{equation}
where $\pi(j)$ is a permutation and $\phi_j$ is a phase.  These permute the eigenvectors of the density operator and add an overall phase to each eigenvector. Free unitaries comprise the set of coherence preserving operations \cite{Peng_2016MaxCohStatesAndCoherencePreservOps}.  

Let the coherence be measured by the $l_1$ norm in a fixed logical basis given by $\bar{\mathcal{S}}$. Then the logical coherence is preserved by logical free unitaries \cite{Peng_2016MaxCohStatesAndCoherencePreservOps}.
% Let a bar over a quantity denote the logical counterpart, e.g., $\ket{\bar{b}}$ is a logical state. 
The set of logical free unitaries is defined as
\begin{equation}\label{eq:logical_free_u}
  \!\!\!  
  \bar{\mathcal{F}} = \{\bar{U}_{\bar{F},k}| \bar U_{\bar{F},k}\ket{\bar{j}} =e^{ia_{j,k}}\ket{\bar{b}_{j,k}},\forall(\ket{\bar j},\ket{\bar{b}_{j,k}})\in \bar{\mathcal{S}}\}, \!\!
\end{equation}
where $a_{j,k}$ is a real number. 
% In the context of physical operations applied to qubits, certain operators may exhibit coherence preservation only for states within the codespace. However, their effect on elements of the codespace can always be mathematically represented by a logical unitary transformation~\cite{Peng_2016MaxCohStatesAndCoherencePreservOps}. Consequently, such operations are formally equivalent to logical free unitary operations when restricted to the codespace.

For surface codes, correction is performed by connecting the measurement qubits that detect an error. In QECC, the length of these chains should be minimized to lower the probability of creating a logical error. In logical coherence correction, minimizing the length of these chains during decoding is not as important since a logical Pauli operator is a logical free unitary. More generally, in stabilizer codes, decoding simplifies as follows. Let $\mathcal{G}_n$ be the +1 elements of the $n$ qubit Pauli group, $s$ be the syndrome, the incoherent basis be the logical standard basis, and Pr$(.)$ the probability. The decoding problem becomes a problem of finding any element  
\begin{align}
    R\in\{E_i|\text{Pr}(E_i|s)>0, E_i\in \mathcal{G}_n\}
\end{align}
as the correction (see Ref.~\cite{deMarti_iOlius_2024DecodingAlgForSurfCodes} for comparison).

% When measuring the coherence of the encoded state, there are two types of coherence that can be measured: logical and physical coherence. Mathematically, the physical incoherent basis uses the physical structure of a logical basis, but extended to a basis over the full Hilbert space. Importantly, since the physical basis is just an extension of the logical basis, the two measures coincide on elements of the code space.
% The difference between these two measures is quantified by an example. 

% Let our code space be the span of the logical states
% $\ket{0_L}=\frac{1}{\sqrt{2}}(\ket{00}+\ket{11})$ and $\ket{1_L}=\frac{1}{\sqrt{2}}(\ket{00}-\ket{11})$. Suppose our state is
% \begin{align}
%     \rho_L=\op{0_L}=&\frac{1}{2}(\op{00}+\op{00}{11}\\
%     \notag&+\op{11}{00}+\op{11}).
% \end{align}
% To obtain the physical coherence, we calculate the coherence of the physical state $\frac{1}{2}(\op{00}+\op{00}{11}+\op{11}{00}+\op{11})$ in the standard basis, which results in a physical coherence of $1$. On the other hand, the logical coherence is calculated by taking the coherence of the logical state $\rho_L$ in terms of the logical basis, which results in a logical coherence of zero. Thus, the set of free unitaries for logical coherence and physical coherence are not necessarily equal.

\textit{Physical Coherence Correction}--When measuring the coherence of the encoded state, there are two types of coherence that can be measured: logical and physical coherence. Mathematically, the physical incoherent basis uses the physical structure of a logical basis, but extended to a basis over the full Hilbert space. Importantly, since the physical basis is just an extension of the logical basis, the two measures coincide on elements of the code space.

The case of physical coherence correction yields an interesting scenario. In QECC, information is typically only stored in the logical space. However, for the $l_1$-norm of coherence, the physical coherence measurement basis can be utilized.  The free unitaries for physical coherence for a fixed physical basis $S$ for measuring the $l_1$-norm of coherence is given by the set of free unitaries for the physical states, which are defined in Eq.~(\ref{eq:physical_free_u}).  
%\begin{align}
%    \mathcal{F}'\equiv\{U_{F,k}| U_{F,k}\ket{j} =e^{ia_{j,k}}\ket{b_{j,k}},\forall(\ket{j},\ket{b_{j,k}})\in \mathcal{S}'\}.
%\end{align}

Let the logical basis for the code space $\mathcal{C}$ be $\text{basis}(\mathcal{C})\subset \mathcal{S}$. Since $\mathcal{C}$ is a subspace,  the physical coherence preserving operations for $\mathcal{C}$ are not necessarily physical free unitaries. The channels corresponding to physical coherence preserving operations depend on the structure of the code space. For example, let the code space $\mathcal{C}$ be the 3 qubit repetition code and $\mathcal{S}$ be the standard basis. Then the bit-flip channel
\begin{align} \label{Eq:bf_Channel3}
    \mathcal{E}(\rho)=(1-p)I\rho I+ \frac{p}{3}\sum_{k = 1}^3 X_k \rho X_k    
\end{align}
preserves the coherence of any state in the code space due to orthogonality of the error subspaces. The coherence of this state is further discussed in the supplementary material \cite{SuppMat}. This example motivates the following result.
\begin{prop}\label{prop:physical_coherence}
    Let the noise channel be a Pauli channel $\mathcal{E}(\rho)=\sum_i\alpha_iE_i\rho E_i^\dagger$, the code space 
    $\mathcal{C}$ be a stabilizer code able to correct up to $t$ errors, and the Kraus operators $E_i$ act nontrivially on up to $t$ qubits. For the $l_1$ norm measure of coherence $C$, $C(\rho)=C(\mathcal{E}(\rho))$ $\forall \rho\in\mathcal{C}$, for the fixed physical incoherent basis
    \begin{align}
        \mathcal{S}=\{\ket{\psi} | \exists P_j, \ket{\psi} = P_j\ket{\bar{0}} \},
    \end{align}
    where $\ket{\bar{0}}$ is the logical 0 state of the stabilizer code and the $P_j$ are tensor products of the Pauli operators and identity.
\end{prop}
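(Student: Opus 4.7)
The plan is to leverage the fact that the basis $\mathcal{S}'$ is aligned with the error-syndrome decomposition of the stabilizer code. First, I would argue that $\mathcal{S}'$, after quotienting by the stabilizer of $\ket{0}_L$, consists of exactly $2^n$ orthonormal vectors that partition the Hilbert space into $2^{n-k}$ syndrome blocks of dimension $2^k$; within each block the $2^k$ basis vectors are labeled by logical computational basis states shifted into that syndrome sector. For any code state $\rho\in\mathcal{C}$, all off-diagonal entries of $\rho$ in $\mathcal{S}'$ lie in the syndrome-zero block and coincide with the off-diagonals of $\rho$ in the logical basis, so $\sum_{j\neq k}|\rho_{jk}|=C(\rho)$.

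Next, I would exploit the Pauli algebra to describe how each $E_i$ acts on $\mathcal{S}'$. For any $P_j\in\mathcal{P}_n$ one has $E_i P_j=\omega P_k$ with $\omega\in\{\pm 1,\pm i\}$ and $P_k\in\mathcal{P}_n$, so $E_i$ acts on $\mathcal{S}'$ as a unit-modulus-phased permutation. Consequently, the matrix of $E_i\rho E_i^\dagger$ in $\mathcal{S}'$ is obtained from that of $\rho$ by permuting rows and columns and multiplying each entry by a unit-modulus phase, and its support is confined to the single syndrome block determined by $E_i$. Since the code corrects $t$ errors and every $E_i$ has weight at most $t$, the product $E_i^\dagger E_k$ has weight at most $2t<d$; this forces it to be either a stabilizer element (same syndrome, identical action on $\mathcal{C}$) or to anticommute with some generator (disjoint syndromes), because weight strictly less than $d$ excludes any nontrivial logical Pauli. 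Hence errors sharing a syndrome differ by a stabilizer and produce identical operators $E_i\rho E_i^\dagger$ on the code, while errors with distinct syndromes produce outputs in orthogonal blocks.

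Combining these observations, the off-diagonals of $\mathcal{E}(\rho)=\sum_i\alpha_i E_i\rho E_i^\dagger$ in $\mathcal{S}'$ are block diagonal with respect to the syndrome decomposition, and inside each block $s$ each entry equals $\big(\sum_{i:\mathrm{syn}(E_i)=s}\alpha_i\big)\,e^{i\phi_{jk}}\rho_{jk}$ for some phases $\phi_{jk}$. Taking absolute values and summing over all blocks yields $\big(\sum_i\alpha_i\big)\sum_{j\neq k}|\rho_{jk}|=C(\rho)$, giving the proposition. The main subtlety I expect is justifying the constructive (non-canceling) combination within a fixed syndrome block: I must rule out the possibility that two weight-$\leq t$ Paulis yield the same syndrome but inequivalent phased permutations on $\mathcal{C}$, which is precisely what the distance inequality $2t+1\leq d$ ensures by forbidding any nontrivial logical-Pauli component in $E_i^\dagger E_k$.
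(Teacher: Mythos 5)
Your proposal is correct and rests on the same two pillars as the paper's proof: the distance condition $2t<d$ forces the error subspaces $E_i\,\mathcal{C}$ into (block-)orthogonal syndrome sectors, and each Pauli $E_i$ acts on the basis $\mathcal{S}'$ as a phased permutation, so each term $\alpha_i E_i\rho E_i^\dagger$ contributes exactly $\alpha_i C(\rho)$ to the off-diagonal sum, with trace preservation ($\sum_i\alpha_i=1$) closing the argument. The one place you go beyond the paper is worth keeping: the paper's proof asserts flatly that all terms $E_i\rho E_i^\dagger$ are mutually orthogonal, which is false for degenerate codes (e.g.\ the Shor code used later in the paper, where $Z_1$ and $Z_2$ act identically on $\mathcal{C}$); your case split --- same syndrome $\Rightarrow$ $E_i^\dagger E_k$ is a stabilizer up to phase and the terms are \emph{identical}, different syndrome $\Rightarrow$ the terms are orthogonal --- is exactly the repair needed, and both cases still sum constructively to $C(\rho)$. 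So this is the paper's argument, carried out with the rigor the paper omits.
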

\begin{proof}
    Since the stabilizer code can correct up to $t$ errors and the noise is Pauli, each of the terms $E_i\rho E_i^\dagger$ are orthogonal $\forall \rho\in\mathcal{C}$. Since the map is also trace preserving, the sum of the absolute values of the off diagonal terms give the same coherence value for the initial and output states. 
\end{proof}

More generally, a class of channels that preserves physical coherence for states in a subspace $\mathcal{C}$ and basis $\mathcal{S}$ are channels $\sum_ip_iU_{F,i}^{\phantom{\dagger}}\rho U_{F,i}^\dagger$, where each of the terms are orthogonal, 
\begin{align}
&U_{F,i}^{\phantom{\dagger}}\rho U_{F,i}^\dagger U_{F,j}^{\phantom{\dagger}}\rho U_{F,j}^\dagger=0 \\
\text{such that } &U_{F,i}, U_{F,j}\in \mathcal{F}, \forall i\neq j, \forall\rho\in\mathcal{C}.
\end{align} 
The orthogonality of each of the terms imply that the off-diagonal terms do not overlap and the total coherence is preserved. QECCs by construction can achieve this channel by projecting onto the code space (Ref.~\cite{nielsen2011quantumCompAndQuantInfo} Chapter 10). For an $[[n, k, d]]$ code, the orthogonality of the error subspaces are preserved up to errors on $t\leq\frac{d-1}{2}$ qubits. This observation leads to a simplified coherence correction process for QECCs. 

% \textit{V. Physical Coherence Correction}.--
Consider an $[[n, k, d]]$ stabilizer code for multiple rounds of error and correction. We can combine the measurements of stabilizers for rounds up until $t$ errors accumulate. At this point, we either use the state or we measure all the stabilizers independently to discretize the errors. After discretizing the errors, we can use the rotated frame \cite{Knill2005_QuantCompWithRealisticNoisyDevice} and measure combined stabilizers again. We will clarify this with examples below. Note that in standard QECC we can also let errors accumulate by tracking the Pauli frame, but eventually we would have to correct the errors in order to use the state. In contrast, for physical coherence correction the state always has the right amount of coherence in the physical basis and can be used directly.

% %-----------------------------------------------------------------------------------------------------------------------------------------------------------------------------------------------
{\it Example 1: Physical Coherence, Repetition Code}--Let the code be the repetition code of distance 5, $\mathcal{S}$ be the standard basis, and the noise channel be an independent and identically distributed bit flip channel
\begin{align}
    \mathcal{E}(\rho)=(1-p)I\rho I+ \frac{p}{5}\sum_{i=1}^5 X_i\rho X_i.    
\end{align} Note that the single-qubit bit flip channel $\mathcal{E}(\rho)=(1-p)\rho + pX\rho X$ acting on a single qubit state can change the coherence of the input state. The 5-qubit repetition code can correct up to 2 errors. Thus, we can wait two cycles of errors before measuring the parities to discretize the errors. Similar to the discussion around Eq.~\eqref{Eq:bf_Channel3}, at each cycle, the state has the right amount of coherence in the standard basis and can be used directly. By direct calculation, we can see the sum of the absolute value of off-diagonal terms will not depend on the error rate, $p$, after the first and second cycle.

{\it Example 2: Physical Coherence, Shor Code}--Let the code space be the Shor code and coherence in the basis
\begin{align}
    \mathcal{S}=\{\ket{\psi} | \exists P_i, \ket{\psi} = P_i\ket{\bar{0}} \}    
\end{align}
where the $P_i$ are tensor products of the Pauli operators and identity and $\ket{\bar{0}}$ is the logical 0 state. Let the noise channel be an i.i.d. arbitrary single qubit error channel. The stabilizer generators \cite{nielsen2011quantumCompAndQuantInfo} are
\begin{align}
    \begin{bmatrix}
        Z & Z & I & I & I & I & I & I & I\\
        I & Z & Z & I & I & I & I & I & I\\
        I & I & I & Z & Z & I & I & I & I\\
        I & I & I & I & Z & Z & I & I & I\\
        I & I & I & I & I & I & Z & Z & I\\
        I & I & I & I & I & I & I & Z & Z\\
        X & X & X & X & X & X & I & I & I\\
         I & I & I & X & X & X & X & X & X  
    \end{bmatrix}.
\end{align}
If we are going to use the state after one round, we can combine the $Z$ stabilizer measurements and measure $ZZIZZIZZI$ and $IZZIZZIZZ$. This results in a mixed state with the correct amount of physical coherence. For multiple rounds, we must measure all the stabilizers independently until the last round. However, we never need to apply a recovering unitary; stabilizer measurements are sufficient.

{\it Entanglement Correction}--%Next, we discuss entanglement correction. 
Note that Theorem \ref{thm:logical_resource} also applies to entanglement. For entanglement, logical local unitary operations are entanglement preserving. Thus, as for logical coherence,  decoders need to solve a less stringent optimization problem. However, the previous discussion on physical coherence does not extend to entanglement, since the output state can be mixed, which would generally change the entanglement.

{\it Example 3: Entanglement}--To illustrate the formalism, consider the example of entanglement correction in a two-qubit logical state encoded in a surface code. Suppose a correctable error occurs. We can measure the stabilizers and use a fast decoder (e.g., Union-Find decoder \cite{Delfosse_2021almostlineartimeDecodingAlgForTopCodes, deMarti_iOlius_2024DecodingAlgForSurfCodes}). Importantly, in the context of entanglement correction, applying a logical Pauli operation during the recovery process is not considered an error, as it corresponds to an entanglement-preserving gauge transformation. Thus, the Union-Find decoder has the same accuracy as typically more accurate decoders, since not finding the lowest weight correction results in possibly constructing a logical Pauli operation.

{\it Application to Quantum Sensing (Coherence)}--Here we provide a physical use case of quantum resource correction in DC magnetometry. Consider a 5-qubit repetition code and the logical plus state
\begin{align}
    \ket{+}_L=\tfrac{1}{\sqrt{2}}(\ket{00000}+\ket{11111}).
\end{align}
In DC magnetometry with field $B_z$, the parameter $\alpha\equiv \omega t$ (with $\omega\propto B_z$) is imprinted by the Hamiltonian $H=\tfrac{1}{2}\sum_{k=1}^5 Z_k$, yielding the ideal state
\begin{align}
    \ket{+'}_L=\tfrac{1}{\sqrt{2}}(\ket{00000}+e^{i5\alpha}\ket{11111}).
\end{align}
Parameter measurement is accomplished by measuring $X^{\otimes 5}$.

Let the error channel act independently on each qubit $k$ as $x$-axis rotations,
\begin{align}
    \mathcal E(\rho)= p_0 I\rho I + \sum_{k=1}^5 p_k\,R_x^{(k)}(\theta_k)\rho R_x^{(k)}(\theta_k)^\dagger
\end{align}
where $\sum_{k = 0}^5 p_k = 1$ and the $\theta_k$ are rotation angles. Although the errors are continuous, measuring the code’s stabilizers discretizes them into single-qubit $X$ flips. Since the code corrects up to two $X$ errors, each of the error subspaces are orthogonal and the coherence relevant to $X^{\otimes 5}$ is preserved. Let $\mathcal{M}$ be the stabilizer-measurement channel. Then
\begin{align} \label{Eq:Exp_Val_Equality}
\tr\!\big(X^{\otimes 5}\,\mathcal{M}\!\circ\!\mathcal{E}(\op{+'}_L)\big)=\tr\!\big(X^{\otimes 5}\,\op{+'}_L\big),
\end{align}
so the metrological performance at the operating point is maintained without solving the decoding problem. Additionally, stabilizer measurements can be combined due to Prop.~\ref{prop:physical_coherence}.  Equality in Eq.~\eqref{Eq:Exp_Val_Equality} can be verified explicitly by considering the combined stabilizer measurements 
$S_1 = Z_1 Z_2 Z_3 Z_4$ and $S_2 = Z_2 Z_3 Z_4 Z_5$. Under these measurements, the map 
$\mathcal{M} \circ \mathcal{E}$ acting on the logical state $\op{+'}{+'}_L$ yields four possible output states:
\begin{align} \nonumber
    &\op{+'}{+'}_L, \quad 
    X_1 \op{+'}{+'}_L X_1, \quad
    X_5 \op{+'}{+'}_L X_5, \\
    &\text{and} \quad
    \frac{1}{p_2 + p_3 + p_4} \sum_{k=2}^4 p_k\, X_k \op{+'}{+'}_L X_k.
\end{align}
Using the linearity and cyclicity of the trace, one finds that each of these states results in the same expectation value, since
\begin{align}
    \tr\!\left[X^{\otimes 5} \, X_j \op{+'}{+'}_L X_j\right] = \tr\!\left[X^{\otimes 5} \op{+'}{+'}_L \right],
\end{align}
where the equality follows from $X_j X^{\otimes 5} X_j = X^{\otimes 5}$ for all $j \in \{1,2,3,4,5\}$. Hence, the expression in Eq.~\ref{Eq:Exp_Val_Equality} holds as claimed.

{\it Conclusion/Discussion}--In this work, we explored resource correction theory and presented applications in quantum information processing. Since resource correction focuses only on correcting the amount of resource of the state, it reduces the required overhead of standard QECC. For instance, stabilizer measurements can be combined. Additionally, the resource preserving operations result in great flexibility on the decoder and simplify the optimization problem. 

Our methods can minimize computational overhead in resource correction protocols by reducing the need for complex corrective operations. This is particularly advantageous for long-term quantum information storage, where preserving coherence and entanglement is critical, such as in memory qubits for quantum repeaters. Furthermore, our work offers a pathway for future research into coherence-preserving operations \cite{Hsieh_2020resourcePreservability}, which could be adapted as a broader resource-compatible transformation. 

In addition to the correction theory, we provided an application of coherence correction to quantum sensing.  We also note that, if the resource is a Bell state with known parity, but unknown phase, an unknown entangled resource state can be utilized for the Ekert91 quantum key distribution protocol \cite{Ekert_1991QuantCrypBasedOnBellsThrm}, since we can optimize over the labeling of the Bell test. 

Note that the necessary and sufficient conditions that were presented use information theoretic methods.  Thus, from that perspective, the conditions are necessary and sufficient.  However, information is not necessarily the objective.  In future work, we will explore more abstract theory and examples that do not rely on the information theoretic approach and provide a more general setting for resource correction.  

Finally, we note that there are many other applications to be explored since we do not know all quantum information processing protocols that utilize a resource rather than a definite state.  
Therefore, we believe that this work will lead to an interesting future line of investigation.  What are the measurements/gates required to preserve different resources?  This could serve as a cost for conversion among different resources (not just different amounts of the same resource.)

{\it Acknowledgments}--We thank Eric Chitambar and Emanuel Knill for helpful discussions. This research was supported in part by the InterQnet project at Argonne National Laboratory, administered by the United States Department of Energy. MSB acknowledges the support of the National Science Foundation's IR/D program.  The opinions and conclusions expressed herein are those of the authors, and do not represent any funding agencies. MT also acknowledges the financial support of the Polish National Science Foundation (NCN) through the grants Opus 2019/35/B/ST2/01896 and Quant-Era ``Quantum Coherence Activation By Open Systems and Environments'' QuCABOoSE 2023/05/Y/ST2/00139 and the support of the Taiwan National Science and Technology Council with the grant number NSTC 114-2811-M-003-025. LAW has also received support from the Spanish Ministry for Digital Transformation and of Civil Service of the Spanish Government through the QUANTUM ENIA project call - Quantum Spain, EU through the Recovery, Transformation and Resilience Plan-NextGenerationEU within the framework of the Digital Spain 2026. This material is based upon work supported by the U.S. Department of Energy, Office Science, Advanced Scientific Computing Research (ASCR) program under contract number DE-AC02-06CH11357 as part of the InterQnet quantum networking project.

\vfill

\small

\noindent\framebox{\parbox{0.97\linewidth}{
The submitted manuscript has been created by UChicago Argonne, LLC, Operator of 
Argonne National Laboratory (``Argonne''). Argonne, a U.S.\ Department of 
Energy Office of Science laboratory, is operated under Contract No.\ 
DE-AC02-06CH11357. 
The U.S.\ Government retains for itself, and others acting on its behalf, a 
paid-up nonexclusive, irrevocable worldwide license in said article to 
reproduce, prepare derivative works, distribute copies to the public, and 
perform publicly and display publicly, by or on behalf of the Government.  The 
Department of Energy will provide public access to these results of federally 
sponsored research in accordance with the DOE Public Access Plan. 
http://energy.gov/downloads/doe-public-access-plan.}}

\bibliography{coherencebib.bib}
\bibliographystyle{apsrev4-2}

\end{document}